\theoremstyle{plain}
\newtheorem{theorem}{Theorem}
\begin{document}
\date{}

\title{Energy in computing systems with speed scaling: optimization and mechanisms design}
\author{Oscar C. Vásquez\footnote{LIP6, Université Pierre et Marie
    Curie, Paris, France and Engineering Industrial Department, University of Santiago of Chile.}}
\maketitle

\begin{abstract}
We study a simple scheduling game for the speed scaling model. Players want their job to complete early, which however generates a big energy consumption. We address the game from the mechanism design side, and by charging the energy usage to the players we seek for a good compromize between quality of service and energy usage.
\end{abstract}

\section{Introduction}
It is common knowledge that we entered now a period of humanity where natural
resources become rare. This situation triggered consciousness in responsible
consumption, and in particular many countries, companies and individuals aim
in minimizing their energy consumption. Minimizing this resources is a
relatively new topic in decision theory, and opened to new problems and
research areas.

Usually minimizing energy consumption is an opposed goal to maximizing quality of service, think for example at transport systems. In this paper we consider the particular case of computing systems with speed scaling. We study a simplified model, with a unique processor, that can run at a variable continuous and unbounded speed. Users submit jobs to this machine, each job has some workload-representing a number of instructions to execute. Then clearly running these jobs at high speed will be benefic for the users, since it leads to small job completion time which is the quality of service experienced by the users. On the other side running at low speed, will be benefic for the machine, which is then more energy efficient.

In the presence of two opposed goals there is the need to combine them into a
single objective value, which we call the \textit{social welfare}. For this work we decided to represent the energy consumption cost and the quality of service as some utility value. The machine is controlled by a single \textit{efficient regulator} who maximizes the total social welfare, deciding on the speed and the order in which jobs are to be scheduled on the machine.

\section{General Model}
Formally, we consider a non-cooperative game with $n$ users and a regulator. The regulator manages the machine where the jobs are executed. Each user has a job $i$ with a workload $w_i$ and a quality of service function $Q_i$. Since the regulator can observe the workload of job $i$, once it completes, it makes sense to assume that he knows $w_i$ in advance.

The regulator states some \textit{cost sharing mechanism} $C$, which is announced publicly. This mechanism defines a cost share function $C_i$ to be charged to user $i$, which is supposed to compensate the energy consumption cost generated by the execution of the workload $w_i$; and specifies the information $\{\hat{I}_i\}$ requested  by the regulator from user $i$ in order to compute an optimum schedule $s$ for a specific objective. Note that the announced information $\hat{I}_i$ and the real information $I_i$ are not necessarily the same.

In this context a schedule is simply defined by two objects: a speed-function $s$, mapping every time point to a non-negative speed, as well as an ordering of the jobs. We denote this order by a permutation $\sigma$ such that $\sigma(i)$ is the rank of job $i$. For convenience, we denote by $\pi$ the inverse of $\sigma$, i.e. $\pi(j)$ is the $j$-th job to scheduled. In the sequel we identify a schedule by the speed function $s$ and an ordering of the jobs.

The schedule defines a completion time $t_i$ for each job $i$ and generates an energy consumption cost defined as $E(s):=\int_0^{+\infty} s^{\alpha}(t) \text{d}t$, and  $\int_0^{t_{\pi(j)}} s(t) \text{d}t = \sum_{k=1}^j w_{\pi(k)}$ for all $j$, for some constant $\alpha$ which is usually assumed to be $2 \leq \alpha\leq 3$.

Finally, the regulator charges a cost share $C_i$ and announces a completion time $\hat t_i$ for each user $i$, which could be different from $t_i$. By normalization we can assume that the welfare function for the cost share $C_i$ is the identity, i.e. the value of cost share in monetary unit is the same value in utility unit.

The objective of each user $i$ is to minimize his welfare
\[W_i:=Q_i(\hat t_i)-C_i(s),\]
whereas the objective of the regulator is to maximize the total welfare.
\[
WT:=\sum_{i=1}^n W_i.
\]
For a given mechanism, these values depend on the information $\hat I$ declared by the users, so we might adopt a function notation for $WT$.

In this model, the strategy for user $i$ is an information $\hat{I_i}$ that he has to announce. We call $\hat{I_i'}$ as an \textit{improvement} for the user $i$ if his welfare $W_i$ for the strategy profile $(\hat{I_{-i}}, \hat{I_i'})$ is strictly larger than for the strategy profile $\hat{I}=(\hat{I_{-i}}, \hat{I_i})$. Here $(\hat{I_{-i}}, \hat{I_i'})$ stands for the strategy profile which is identical with $\hat{I}$, except at index $i$, where the strategy for player $i$ is $\hat I_i'$. Player $i$ is happy in the strategy profile $\hat I$ if there is no possible \textit{improvement} for him. A strategy profile $I$, where every player is happy, is called a \textit{pure Nash equilibrium}.  As we do not mention mixed Nash equilibria in this work, we will omit the term \emph{pure} from now on.

With these definitions in mind, we are facing two related problems:
 \begin{description}
\item[The Centralized optimization problem] where the regulator has an objective to maximize  and the necessary information for computing an optimum is known to him.
\item[The Mechanism design problem] where the regulator has an objective to maximize  and the necessary information for its computing is only known to the users. In this situation, the regulator must design a cost sharing mechanism such that each user will be happy if he announces his private information as required by the regulator.
\end{description}

%The figure \ref{fig:interaction} shows the interaction between users and the regulator in the computing system.

%begin{figure}[ht!]
%\begin{center}
%\centering
%        \includegraphics[width=0.5\textwidth]{Figure1.pdf}
%\caption{Interaction between users and regulator in the computing system}
%\label{fig:interaction}
%\end{center}
%\end{figure}

\subsection{Desirable properties of cost sharing mechanisms}

We have the game theoretical problem of designing a cost sharing mechanism $C$ with the following desired properties.
\begin{enumerate}
\item The mechanism should be \textit{budget-balanced}, meaning that $\sum_i C_i(s) = E(s)$. If not, we want at least a $\beta$-\textit{budget-balanced} mechanism for some constant $\beta$, meaning that $E(s) \leq \sum_i C_i(s) \leq \beta E(s)$.
\item The mechanism $C$ should always admit a Nash equilibrium. In particular, the mechanism will be \textit{strategy proof} if the strategy profile, where every user announces his private information, is a Nash equilibrium.
\item The mechanism $C$ cannot arbitrarily exclude any users; meaning that if user $i$ is willing to pay enough (at least $C_i$) then his job is executed.
\item In the mechanism $C$, every user has the possibility to exclude himself, resulting that his job won't be executed, the user won't be charged and experience an individual welfare of zero.
\end{enumerate}

%xtof->oscar: no need to define PoA and PoS if these definitions are not used in the paper
% The above properties allow to considers the \textit{Price of Anarchy} and the \textit{Price of Stability} as measures of ``quality of the mechanism".
% The Price of Anarchy is then defined as the ratio between the ``worst equilibrium" and the optimal solution
% \[PoA = \frac{\min_{\hat{I}} WT(\hat I)}{\max_{I} WT(I)},\]
% whereas, Price of Stability (PoS) measures the ratio between the ``best equilibrium" and the optimal solution
% \[PoS = \frac{\max_{\hat{I}} WT(\hat I) }{\max_{I} WT(I)}.\]
% In the expressions above the maximization in the denominator ranges over all strategy profiles $I$, while the numerator ranges only over strategy profiles $\hat I$ which are Nash equilibria.
% We know that $PoA \leq PoS \leq 1$ by definition.

\subsection{Users}

We study two types of users according, depending on their welfare functions for quality of service, which can either be $Q_i^A$ and $Q_i^B$ as defined below. Both functions are monotone non-increasing in the completion time of job $\hat t_i$, and translating how important it for user $i$ that his job $i$ completes early. Type A users have some constant utility when the job completes before some private deadline and zero utility if it completes late. Type B users have some utility which is linear with the job completion time. So there is a given utility if their job completes at the ideal time zero, and the utility fades linearly with increasing completion time. The formal definitions of these qualitity of service functions are given below in equations \ref{eq:qualy1} and \ref{eq:qualy2} for some constants $U^A_i,d_i, U^B_i, p_i >0$.

\begin{align}
\label{eq:qualy1}
Q_i^A(\hat t_i ) := &\left\{ \begin{array}{ll}
                 U_i^A & \mbox{if } \hat t_i\leq d_i \\
                 0   & \mbox{otherwise } \end{array} \right.
\\
\label{eq:qualy2}
Q_i^B(\hat t_i) := & U_i^B-p_i \hat t_i
\end{align}

The variants of optimization and mechanism problems are detailed in the next sections.

\section{Type A users}

In this section we consider the case where all users are of type A. This means that every user has a quality of service function $Q_i^A$ using a private value $U^A_i$ and private deadline $d_i$.  The regulator requests the private deadlines from the users.

\subsection{Optimization problem}

We consider the centralized optimization problem consisting in the maximization of the total welfare. Suppose  that the regulator knows the deadlines $d$ of all users and the sum of utility values $\sum_i U_i^A$ is sufficiently large, such that the maximum total welfare is when all users participate to the game. By considering a budget balanced mechanism, we have:

\begin{theorem} Consider a game with type A users. Assume that total welfare is maximized when all users participate to the game. The problem of maximizing the total welfare is equivalent to the problem of minimizing the consumed energy cost which can be computed in time $O(n\log n)$.  \end{theorem}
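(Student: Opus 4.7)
The plan is two-fold: first, use the budget-balance hypothesis together with the assumption that all users participate to reduce the welfare maximization to an energy minimization; second, solve this minimization via a convex-envelope construction in $O(n\log n)$.

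For the reduction, I would observe that if the total welfare is maximized when every user participates, then at the optimum each user $i$ must meet his deadline (otherwise $W_i=-C_i\le 0$, and removing him from the schedule would strictly decrease $E(s)$ while leaving the other users' QoS contributions unchanged, contradicting optimality). Hence $Q_i^A(\hat t_i)=U_i^A$ for every $i$, and combined with the budget-balance condition $\sum_i C_i(s)=E(s)$ this yields
\[
WT=\sum_{i=1}^n U_i^A \;-\; E(s).
\]
Since the first sum is a constant, maximizing $WT$ is equivalent to minimizing $E(s)$ subject to the constraint that each job completes by its deadline.

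For the algorithmic part, an exchange argument shows that there exists an optimal schedule in which jobs are processed in EDF order: if two consecutive jobs appear out of deadline order, swapping them leaves the speed function (hence $E(s)$) unchanged and cannot violate any deadline. After relabelling so that $d_1\le\cdots\le d_n$ and setting $W_k=\sum_{i\le k}w_i$, the problem becomes
\[
\min_{s\ge 0}\int_0^{d_n} s(t)^{\alpha}\,\text{d}t \quad \text{s.t.} \quad \int_0^{d_k} s(t)\,\text{d}t \ge W_k,\ k=1,\dots,n.
\]
By strict convexity of $x\mapsto x^{\alpha}$, the optimal cumulative-work function $t\mapsto\int_0^t s(\tau)\,\text{d}\tau$ coincides with the lower convex envelope of the points $(0,0),(d_1,W_1),\dots,(d_n,W_n)$, and the slopes of this envelope are the optimal speeds on the corresponding intervals. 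Since the $d_k$ are already sorted, the envelope is obtained by a single linear-time scan, so the whole procedure costs $O(n\log n)$ dominated by the initial sort.

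The main obstacle in a fully rigorous write-up is justifying that the optimum really is this lower convex envelope: one must rule out intermediate deadlines being tight. I would handle this either by a direct KKT analysis (constant speed on each segment, with active constraints only at hull vertices, and Lagrange multipliers $\ge 0$) or by the classical most-critical-interval induction of Yao--Demers--Shenker, specialised to the common release-time case where critical intervals are necessarily of the form $[0,d_k]$. Everything else---the reduction via budget balance, the EDF exchange, and the hull computation---is routine.
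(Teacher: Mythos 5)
There is a genuine error in the algorithmic half of your argument: you have the hull upside down. The optimal cumulative-work function $F(t)=\int_0^t s(\tau)\,\mathrm{d}\tau$ is \emph{not} the lower convex envelope of the points $(0,0),(d_1,W_1),\dots,(d_n,W_n)$; it is their \emph{least concave majorant} (the upper concave hull). This is forced by the averaging argument the paper cites from Yao--Demers--Shenker: with a common release time $0$, the optimal speed function is non-increasing, so $F$ is concave, and feasibility requires $F(d_k)\ge W_k$, so $F$ must lie \emph{above} the points. Your lower convex envelope fails on both counts. Concretely, take two jobs with $d_1=1$, $W_1=1$, $d_2=2$, $W_2=1.5$: the points are in concave position, your envelope is the single segment from $(0,0)$ to $(2,1.5)$ with slope $0.75$, and it gives $F(1)=0.75<1$, violating job $1$'s deadline. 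Conversely, take $d_1=1$, $W_1=0.5$, $d_2=2$, $W_2=2$: the points are in convex position, your envelope passes through $(1,0.5)$ with slopes $0.5$ then $1.5$, costing (for $\alpha=2$) energy $0.25+2.25=2.5$, whereas the concave majorant is the straight line of slope $1$, feasible and of energy $2$. So the construction is either infeasible or strictly suboptimal. Your own sanity checks would have caught this: the slopes of a convex envelope increase, contradicting the non-increasing-speed property, and the KKT multipliers come out with the wrong sign at a convex-hull vertex.

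The fix is routine and brings you back to the paper's proof: compute the upper concave hull of the sorted points by a monotone-stack scan (this is exactly the paper's merge rule, which pops pairs $(v,\ell),(v',\ell')$ while $v'\ell\le v\ell'$ so as to keep the speeds $v/\ell$ strictly decreasing), for a total of $O(n\log n)$ dominated by sorting. The rest of your proposal --- the reduction of welfare maximization to energy minimization via budget balance (including your added justification that every participating job must meet its deadline, which the paper leaves implicit), the EDF exchange argument, and the overall complexity accounting --- is correct and essentially identical to the paper's.
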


Minimizing the consumed energy cost has been widely studied, in a more general setting with release times. The paper \cite{YaoDemmersShenker1995} provided the first polynomial time algorithm which time complexity was improved from $O(n^3)$ to $O(n^2 \log n)$ by \cite{LiOn3200discrete}. This has been  improved to $O(n^2)$ if the release time--deadline intervals are laminar, i.e.\ have a tree structure \cite{Min-energytree06}.

In \cite{YaoDemmersShenker1995} it was observed by an averaging argument that in an optimal schedule $s$, if job $i$ is scheduled at speed $s(t)$ at some time $t$, then in the whole span of the job --- which is $[0,d_i)$ in our case --- the speed must be at least $s(t)$. From this fact follows that in an optimal schedule the speed function $s$ is non-increasing and even piecewise constant. So we can describe the schedule by a sequence of pairs of the form $(v,\ell)$ where $v / \ell$ stands for a speed, $\ell$ for the duration of an interval and $v$ for the workload which can be done in this interval $\ell$. We use this argument in our proof.

\begin{proof} (sketch)
Assume that jobs are ordered according to deadlines, i.e. $d_1 \leq \ldots \leq d_n$. By an exchange argument without loss of generality jobs are scheduled by an optimal schedule in the same order. This means that in our case a schedule is really defined by the speed function $s$.
The optimization problem can be formulated as
\begin{equation}
\label{eq:objfun1}
\max \sum_{i=1}^n W_i=\max \sum_{i=1}^n Q_i^A(s)-E(s)= \max \sum_{i=1}^n U_i^A-E(s)
\end{equation}
s.t
\begin{equation}\label{eq:work}
\int_{0}^{d_i} s(t) \text{d}t \geq \sum_{j=1}^i w_j \quad \forall i.
\end{equation}

Since $\sum_{i=1}^n U_i^A$ is constant, the same schedule minimizes $E(s)$ under condition (\ref{eq:work}).  Thus, the mechanism needs to compute a speed function $s$ minimizing the consumed energy while respecting all deadlines.

We describe now a linear time algorithm to compute the optimal schedule $s$, assuming jobs are already ordered according to deadlines. Let $O_i$ be a stack of pairs in decreasing order of speed, describing the optimal schedule for job set $\{ 1, \ldots, i \}$. We define $O_0$ as the empty stack. Now $O_i$ is obtained from $O_{i - 1}$ in the following manner. First we push on $O_{i - 1}$ the pair $( w_i, d_i -d_{i - 1})$. Then while $O_{i - 1}$ contains at least two pairs, and
the two top pair $( v, \ell )$ and second top pair $(v',
\ell')$ satisfy $v' \ell \leq v \ell'$, we replace them by $(v + v', \ell + \ell' )$.
The invariant is that $\sum_{(v,\ell) \in O_i} \ell = d_i$ and the pair $( v, \ell )$ are in strict decreasing order of $\frac{v}{\ell}$
The proof correctness is omitted.

Thus, our specific problem can be solved in time $O(n \log n)$, the necessary time to sort jobs according to deadlines.
\end{proof}

%\noindent \textbf{Open questions?}
%\begin{enumerate}
%\item What is a "good" algorithm if we relax the non-negative welfare condition?
%\item How can the mechanism exclude a user?
%\end{enumerate}

\subsection{Mechanism design problem}

We adopt the above assumptions and define that the information requested by the regulator are the deadlines $d$ of the jobs. This allows him to optimize total welfare.
Thus, only if the information announced by users are the true private values, the regulator is able to optimize total welfare.

We study the \textit{proportional cost} sharing mechanism. Its definition is quite easy, every user is charged exactly the cost generated by the worklaoad of its job. In this mechanism, the regulator annouces the true job completion times, that is $\hat t_i=t_i$.

\begin{theorem}
The game with type A users and the proportional cost  sharing mechanism is strategy proof. 
\end{theorem}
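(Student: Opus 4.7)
The plan is to fix a user $i$, assume that every other user $j \ne i$ reports his true deadline $d_j$, and show that any deviation $\hat d_i \ne d_i$ gives user $i$ a weakly smaller welfare. Let $s^\star$ denote the YDS-optimal schedule on the truthful profile $(d_i,d_{-i})$ and $s$ the one on the deviated profile $(\hat d_i,d_{-i})$, with completion times $t_i^\star, t_i$. In the YDS optimum each job runs inside a single block of piecewise-constant speed, so the proportional charges simplify to $C_i^\star = w_i (v_i^\star)^{\alpha-1}$ and $C_i = w_i v_i^{\alpha-1}$, where $v_i^\star, v_i$ are the speeds of the blocks containing $w_i$. I would split the analysis according to the sign of $\hat d_i - d_i$ and, in the case $\hat d_i > d_i$, according to whether $t_i \le d_i$ still holds.

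Suppose first that $\hat d_i > d_i$ and $t_i \le d_i$. Then $s$ is feasible for the truthful profile and $s^\star$ is feasible for the deviated one (since $t_i^\star \le d_i < \hat d_i$), so optimality on both sides gives $E(s) = E(s^\star)$; strict convexity of $s \mapsto \int s^\alpha \mathrm{d}t$ for $\alpha > 1$ forces uniqueness of the YDS minimizer and hence $s = s^\star$, leaving user $i$'s welfare unchanged. Suppose next that $\hat d_i > d_i$ but $t_i > d_i$, so $Q_i^A(t_i) = 0$ and the deviation yields welfare $-C_i \le 0$. Here I would establish $U_i^A \ge C_i^\star$ via a zero-out comparison: replacing $s^\star$ by $0$ on the time interval during which $w_i$ is processed yields a schedule that is still feasible for the $(n-1)$-user game without $i$ (for each $j \ne i$ the cumulative workload constraint at $d_j$ continues to hold, losing $w_i$ on both sides exactly when $j \ge i$), with energy $E(s^\star) - C_i^\star$. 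Hence $E^{(i)} \le E(s^\star) - C_i^\star$, and the full-participation assumption rearranged reads $U_i^A \ge E(s^\star) - E^{(i)} \ge C_i^\star$, so the truthful welfare $U_i^A - C_i^\star \ge 0 \ge -C_i$.

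The remaining case is $\hat d_i < d_i$. Here $t_i \le \hat d_i < d_i$, so $Q_i^A(t_i) = U_i^A$ is preserved and the task reduces to showing $v_i \ge v_i^\star$. I would use the recursive YDS decomposition: the first block of an optimum covers $[0, d_{k^\star}]$ at speed $\max_k (\sum_{j \le k} w_j)/d_k$, and the problem recurses on the remaining jobs with shifted deadlines. Tightening $d_i$ to $\hat d_i$ strictly increases the density $\sum_{j \le i} w_j/\hat d_i$ and therefore weakly increases the overall maximum; a short case analysis on whether $i$ lies in the first block of $s^\star$ and/or of $s$, combined with induction on the sub-problem (in which the induced deadline for $i$ is again only tighter), shows $v_i \ge v_i^\star$ and therefore $C_i \ge C_i^\star$.

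The main obstacle I anticipate is this last case: while the first-block speed clearly cannot decrease under a tightening, one has to propagate the inequality through the nested YDS recursion and verify that $w_i$ never migrates into a block of strictly lower speed. The first two cases follow cleanly from the uniqueness of the YDS minimizer and from the zero-out bound on the marginal energy.
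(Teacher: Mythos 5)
The paper's own proof is a terse payoff-table argument in the spirit of a second-price auction: it tabulates the charge $(t_{\pi(j)}-t_{\pi(j-1)})^{1-\alpha}w_{\pi(j)}^{\alpha}$ in a few cases of $\hat d_{\pi(j)}$ versus $d_{\pi(j)}$ and asserts that truth-telling dominates. Your proposal is a genuinely different and considerably more careful route: you compare the two optimal (YDS) schedules before and after a unilateral deviation. Your over-reporting analysis is sound and is actually \emph{more} rigorous than the paper's. In the sub-case $t_i\le d_i$ the uniqueness argument via strict convexity is correct, though you should add one line on why the possible reordering of job $i$ past jobs $j$ with $d_i<d_j<\hat d_i$ cannot move $w_i$ into a slower block of the (identical) speed function: the block containing $i$ ends at the deadline $d_k\ge d_i$ of its tight job, and $t_i\le d_i\le d_k$ keeps $i$'s work-slab inside that same block, so $C_i=C_i^\star$. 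Your sub-case $t_i>d_i$, using the zero-out bound $E^{(i)}\le E(s^\star)-C_i^\star$ together with the full-participation hypothesis to get $U_i^A\ge C_i^\star$, is a nice individual-rationality argument that the paper does not make explicit at all (its table simply records a payoff of $0$ there, ignoring that the deviator is still charged).

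The genuine gap is exactly where you flag it: the under-reporting case. You reduce it to the claim that tightening one's own deadline weakly increases the speed $v_i$ at which one's own job is run in the YDS optimum, but you only sketch the induction and do not rule out the one scenario that matters, namely that the global re-optimization moves $w_i$ into a block of strictly lower speed. Pointwise domination of the cumulative-work curves ($\hat S\ge S$, which follows from $\hat W\ge W$) is \emph{not} by itself enough to compare the slopes over job $i$'s slab, since the slab itself shifts downward ($\hat A_i\le A_i$) and a larger concave majorant can locally have a smaller slope. To close it you need the extra structure that $\hat W$ and $W$ agree outside $[\hat d_i,d_i)$: let $(d_{k'},W_{k'})$ and $(d_k,W_k)$ be the endpoints of job $i$'s block in $s^\star$, so $v_i^\star=(W_k-W_{k'})/(d_k-d_{k'})$ is the minimum slope of $S$ on $[0,W_k]$; since $\hat S(d_k)\ge \hat W(d_k)=W_k$, the work level $W_k$ is reached by $\hat S$ no later than $d_k$, and job $i$'s deviated slab lies in $[0,W_k]$, one must then argue that the last hull segment of $\hat S$ below $W_k$ still has slope at least $v_i^\star$. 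This is true but is precisely the nontrivial content of the theorem; note that the paper's table silently assumes it as well (its $\hat d<d$ column only covers the degenerate row $t_{\pi(j)}=\hat d_{\pi(j)}$). So your plan correctly isolates the crux, but as written the proof of the key monotonicity lemma is missing.
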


\begin{proof}
  Being strategy proof means that if each player announces the true individual deadlines, i.e. $\hat d_i = d_i$ then the resulting strategy profile is a Nash equilibrium. Let $C_i^p(s)$ be the cost share for player $i$, as defined by the \textit{proportional cost sharing mechanism} $C^p$. We define $C_{\pi(j)}^p(s):=(t_{\pi(j)}-t_{\pi(j-1)})^{1-\alpha} w_{\pi(j)}^\alpha$, where for convenience we abuse notation and write $t_{\pi(0)}:=0$.

We assume that each user $\pi(j)$ wants to maximize his welfare $W_{\pi(j)}(s)$. The mechanism satisfies $\hat{t}_{\pi(j)}=t_{\pi(j)}\leq \hat{d}_{\pi(j)}$. No player knows the information of the other players. We show that for this situation announcing the true value is a dominant strategy, in the sense that it leads to a pure Nash equilibrium. The idea is similar to second-price sealed-bid auction. Table \ref{tab:1} shows the payoff of user $\pi(j)$ in function his strategy and the regulator's strategy.
\begin{table}[ht!]
\caption{Payoff for different strategies in different cases}
\label{tab:1}
\newsavebox\topalignbox
\begin{center}
\newcolumntype{M}{>{$\vcenter\bgroup\hbox\bgroup}c<{\egroup\egroup$}}
\begin{tabular}{|M|M|M|M|M|}
\hline
\multicolumn{2}{|c|}{}&\multicolumn{3}{c|}{User'strategy}\\
\cline{3-5}
\multicolumn{2}{|c|}{}          &$\hat{d}_{\pi(j)}<d_{\pi(j)}$&$\hat{d}_{\pi(j)}=d_{\pi(j)}$&$\hat{d}_{\pi(j)}>d_{\pi(j)}$\\
\hline
\multirow{1}{1em}{\begin{sideways}Regulator's strategy\end{sideways}}&
\begin{sideways}~$t_{\pi(j)}<\hat{d}_{\pi(j)}$~{}\end{sideways}&$(t_{\pi(j)}-t_{\pi(j-1)})^{1-\alpha} w_{\pi(j)}^\alpha$&
$(t_{\pi(j)}-t_{\pi(j-1)})^{1-\alpha} w_{\pi(j)}^\alpha$ & 
$\begin{array}{c}
(t_{\pi(j)}-t_{\pi(j-1)})^{1-\alpha} w_{\pi(j)}^\alpha\\
\mbox{or }0
\end{array}$
\\
\cline{2-5}
&\begin{sideways}~$t_{\pi(j)}=\hat{d}_{\pi(j)}$~\end{sideways}&$(\hat{d}_{\pi(j)}-t_{\pi(j-1)})^{1-\alpha} w_{\pi(j)}^\alpha$&$(d_{\pi(j)}-t_{\pi(j-1)})^{1-\alpha} w_{\pi(j)}^\alpha$ &
0\\
\hline
\end{tabular}
\end{center}
\end{table}

It is easy to see that the strategy $\hat{d}_{\pi(j)}=d_{\pi(j)}$ is dominant.
\end{proof}

In order to measure the proportional cost sharing mechanism, we see that it is \textit{budget-balanced} and it does not exclude any users by definition. In addition, if we assume that each user $i$ has a non negative welfare when his workload is executed and the cost share is charged, then this mechanism will be \textit{efficient}. This latter fact implies uniqueness of the Nash equilibrium which also optimizes total welfare.

%\noindent \textbf{Open Questions?}
%\begin{enumerate}
%\item In the case when the assumption over the player is relaxed. Is a "good" mechanism the proportional cost?
%\item What is the trade-off between strategy proofness and optimality of the solution?
%\end{enumerate}

\section{Type B users}

In this section we consider the game with type B users. Every player $i$  has now a quality of service function  $Q_i^B$. In this game the regulator needs to receive from the players the workloads $w$ and deadlines $d$ of
the jobs, so he can maximize total welfare. 

\subsection{Optimization problem}

We assume that the regulator knows the penalties $p$ of the  jobs and the sum of utility values $\sum_i U_i^B$ is sufficiently large, such that the total welfare is maximized when all users participate in the game. We consider a budget balanced mechanism, and  we obtain the following result, which has been independently obtained by Megow and Verschae \cite{megow.verschae:dualSchedules}.

\begin{theorem}
Consider a game with type B users. Assume that the total welfare is maximized when all users participate in the game. The problem of maximizing the total welfare is equivalent to the classical scheduling problem denoted as $1||\sum w_i t_i^\beta$ for
$\beta:=(\alpha-1)/\alpha$.
\end{theorem}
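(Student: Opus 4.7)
The plan is in three steps: (i) strip away constants to reduce the welfare maximization to minimizing $\sum_i p_i t_i + E(s)$; (ii) for each fixed job ordering, solve analytically for the optimal speed profile and express the resulting objective as a closed form in the ordering alone; (iii) recognize this objective as $\sum w_i C_i^\beta$ on a dual single-machine instance via a reversal of the ordering.

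First I would observe that, under the hypothesis that all users participate and with $\hat t_i = t_i$,
\[
WT = \sum_i U_i^B - \sum_i p_i t_i - E(s),
\]
so, because $\sum_i U_i^B$ is constant, maximizing $WT$ is equivalent to minimizing $\sum_i p_i t_i + E(s)$ over all choices of the order $\pi$ and the speed function $s$.

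Next I fix an order $\pi$ and set $L_k := t_{\pi(k)} - t_{\pi(k-1)}$, $t_{\pi(0)} := 0$. By the same averaging argument as in Theorem~1 (Jensen's inequality applied on each sub-interval during which a single job is processed), an optimal speed function is constant equal to $w_{\pi(k)}/L_k$ on $[t_{\pi(k-1)}, t_{\pi(k)})$ and contributes $w_{\pi(k)}^\alpha / L_k^{\alpha-1}$ to $E(s)$. Exchanging the order of summation in $\sum_i p_i t_i$ gives $\sum_k W_k L_k$ with $W_k := \sum_{j\ge k} p_{\pi(j)}$, so the objective decouples into $\sum_{k=1}^n \bigl( W_k L_k + w_{\pi(k)}^\alpha L_k^{1-\alpha} \bigr)$. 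A one-line first-order computation then yields $L_k^\star = w_{\pi(k)} ((\alpha-1)/W_k)^{1/\alpha}$ and an optimum value of the form $c(\alpha) \sum_k w_{\pi(k)} W_k^\beta$ with $\beta = (\alpha-1)/\alpha$ and $c(\alpha) > 0$ an explicit constant.

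Finally I would observe that $W_k$ equals the completion time of job $\pi(k)$ in the reversed single-machine schedule $\pi^{\mathrm{rev}}(m) := \pi(n+1-m)$ run on the instance in which job $j$ has processing time $p_j$. Hence $\sum_k w_{\pi(k)} W_k^\beta = \sum_i w_i C_i^\beta$, where $C_i$ is the completion time of $i$ in $\pi^{\mathrm{rev}}$. Since $\pi \mapsto \pi^{\mathrm{rev}}$ is a bijection on permutations, minimizing our objective over $\pi$ is equivalent to minimizing $\sum w_i C_i^\beta$ over all single-machine schedules of the instance with processing times $p_i$ and weights $w_i$, which is exactly $1 || \sum w_i t_i^\beta$. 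The step I expect to be the main obstacle is the clean justification that one may restrict to contiguous, piecewise-constant speed functions with exactly one constant segment per job; as in Theorem~1, a Jensen-plus-time-shift argument handles both issues and makes the reduction to the $L_k$ variables rigorous.
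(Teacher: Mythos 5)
Your proposal is correct and follows essentially the same route as the paper: reduce to minimizing $\sum_i p_i t_i + E(s)$, decouple the objective into per-interval terms $W_k L_k + w_{\pi(k)}^\alpha L_k^{1-\alpha}$ via the averaging argument, solve the first-order condition for $L_k$ to get an objective of the form $c(\alpha)\sum_k w_{\pi(k)} W_k^\beta$, and identify $W_k$ as a completion time in the order-reversed instance with processing times $p_i$. The paper's proof is this argument verbatim, so there is nothing to add.
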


\begin{proof}
  Fix some optimal schedule. The jobs complete in some order $\pi$, and by the $j$-th job we refer to this order.  For any $j>1$ the $j$-th job schedules in the interval $[t_{\pi(j-1)},t_{\pi(j)})$ wheras the first job schedules in the interval $[0,t_{\pi(1)})$. Denote by $\ell_j$ the length of the execution interval of the $j$-th job.  From the averaging argument from \cite{YaoDemmersShenker1995} we know that the jobs are scheduled throughout their interval at equal speed $s_j:=w_{\pi(j)} / \ell_j$. With these notations the completion time of the $j$-th job can be written as $\sum_{i=1}^j \ell_i$.

Thus, the optimization problem can be formulated as:
\begin{equation}
\label{eq:objfun2}
\max \sum_{i=1}^n W_i(s)=\max \sum_{i=1}^n Q_i^B(s)-E(s)=\max \sum_{i=1}^n \left( U_i^B-p_i t_i\right) -E(s).
\end{equation}
s.t.
\begin{equation} \label{eq:work2}
\int_{0}^{t_j} s(t) \text{d}t \geq \sum_{i=1}^j w_i \quad \forall j.
\end{equation}

Again, since $\sum_{i=1}^n U_i^A$ is constant, the same schedule maximizes the following expression under condition (\ref{eq:work2}):
\begin{equation}
\min \sum_{i=1}^n p_i t_i+E(s).
\end{equation}

And, by definition of $s(t)$, we have:
\begin{equation}
\min \sum_{j=1}^n w_{\pi(j)}^{\alpha}\ell_j^{1-\alpha} +
     \sum_{j=1}^n p_{\pi(j)} \sum_{i=1}^j \ell_i.
\end{equation}

The first order condition of this function on the variable $\ell_j$ for an
arbitrary index $j$ states
\[(1-\alpha) w_{\pi(j)}^{\alpha} \ell_j^{-\alpha} + \sum_{k=j}^n p_{\pi(k)}=0,\]
which implies
\[ w^{\alpha}_{\pi(j)} \ell_j^{-\alpha} = \frac{\sum_{k=j}^n p_{\pi(k)}}{\alpha - 1}. \]
Replacing these equalities in the social cost value leads to
\begin{align*}
 &\sum_{j=1}^n \ell_j \left(\frac{\sum_{k=j}^n p_{\pi(k)}}{\alpha - 1} + \sum_{k=j}^n p_{\pi(k)} \right)
\\
 = &\frac{\alpha}{\alpha-1} \sum_{j= 1}^n \ell_j \sum_{k=j}^n p_{\pi(k)}
\end{align*}
The first order condition on $\ell_j$ above gave the equality
\[
     \ell_j = \left(\frac{w_{\pi(j)}^\alpha (\alpha-1)}
                                   {\sum_{k=j}^n  p_{\pi(k)}}
                                 \right)^{\frac 1\alpha}
\]
which permits to simplify as
\begin{align*}
 &\frac{\alpha}{\alpha-1} \sum_{j=1}^n \left( \frac{ w_{\pi(j)}^\alpha (\alpha-1) }
                                   {\sum_{k=j}^n p_{\pi(k)}} \right)^{\frac 1\alpha} \sum_{k=j}^n p_{\pi(k)} \\
=& \alpha (\alpha-1)^{\frac{1-\alpha}\alpha} \sum_{j=1}^n w_{\pi(j)} \left(\sum_{k=j}^n p_{\pi(k)} \right)^{\frac{\alpha-1}\alpha}.
\end{align*}

Now consider the following scheduling problem, which also consists of
$n$ jobs, but this time, the processing time of job $i$ is $p_i$, its priority weight is $w_i$ and its completion denoted by $C_i$. The aim is to schedule these jobs on a
single machine, with unit speed, so to minimize \[
\sum_i w_i C_i^{\frac {\alpha-1}\alpha}.
\]

Clearly every optimal schedule to the above problem corresponds to an schedule maximizing total welfare in the former model, just by reversing the job order.  In conclusion, the optimization problem is polynomial equivalent to a classical scheduling problem denoted as $1||\sum w_i C_i^\beta$ for $\beta:=(\alpha-1)/\alpha$.  
\end{proof}

The complexity of this problem type remains still open. Concerning approximation algorithms, Stiller and Wiese \cite{StillerWiese:10:Increasing-speed} show that the Smith's rule \cite{Smith:56:Various-optimizers} guarantees an approximation factor of $(\sqrt{3} + 1)/2 \approx 1.366$ which is tight when $f$ is part of the problem input, whereas H\"ohn and Jacobs \cite{HohnJacobs:12:On-the-performance-of-Smiths} derive a simple method to compute the tight approximation factor of a Smith-ratio-schedule for any particular monotone increasing convex or concave cost function.

%\noindent \textbf{Open Questions?}
%\begin{enumerate}
%\item What is a "good" algorithm in order to compute the solution?
%\item Is the best speed-function always piecewise constant?
%\item What is a "good" algorithm if we relax the non-negative welfare assumption?
%\item How does the mechanism  exclude a user?
%\end{enumerate}

\subsection{Mechanism design problem}

From the previous theorem it follows that in order to maximize total welfare, the game regulator needs to know the individual penalties $p$ of the players. Therefore we aim for a thruthful cost sharing mechanism. To fix the notations, we call X the mechanism proposed in this section, and assume that the game regulator announces completion times $\hat t_i$ which are the true completion times $t_i$ of the schedule produced by the regulator.

We define now the cost sharing mechanism X, where  $C_i^x(s)$ is the cost share for player $i$. The mechanism computes the schedule that maximizes total welfare.  By the $j$-th job, we refer to the order in which jobs complete therein.  We define the cost share for the $j$-th player $\pi(j)$ as follows:
\[
           C_{\pi(j)}^x(s):=\sum_{k=1}^{j} \left(\alpha s_k^{\alpha} - \hat p_{\pi(j)} \right)\ell_k
\]
where  
\[
s_k^{\alpha}=\frac{\sum_{r=k}^n \hat p_r}{\alpha-1}=\frac{w_k^{\alpha}}{\ell_k^{\alpha}}.
\]

Note that $C_{\pi(j)}^x(s)$ is equivalent to:
\[(\alpha-1)^{-1} \sum_{k=1}^{j} \left(\hat p_j+\alpha \sum_{r=k,r\neq j}^{n} \hat p_r \right)\ell_k\] where \[
\ell_k = \left( \frac{w_k\sum_{r=k}^n \hat p_r}{\alpha-1} \right)^{-1/\alpha}.
\]
Hence, the cost share for the first player in the sequence is the penalty weighted execution time over all other jobs plus its energy cost.

%on a deja dit
%In this mechanism, we consider that the announced completion time $\hat t_i$ is the true completion time $t_i$.
\begin{theorem}
The game with type B players that the cost sharing mechanism X is strategyproof.
\end{theorem}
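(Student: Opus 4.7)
I fix the other players' reports $\hat p_{-i}$ and the true penalty $p_i$ of player $i$, and set out to show that $W_i$, viewed as a function of the announced penalty $\hat p_i$, is uniquely maximised at $\hat p_i = p_i$. My first step is to rewrite the cost share in a more transparent, VCG-like form: expanding the definition,
\[
C^{x}_{\pi(j)}(s) \;=\; \alpha\sum_{k=1}^{j} s_k^\alpha \ell_k \;-\; \hat p_{\pi(j)} \sum_{k=1}^{j} \ell_k \;=\; \alpha\,E_j \;-\; \hat p_{\pi(j)}\,t_{\pi(j)},
\]
where $E_j$ denotes the energy consumed up to time $t_{\pi(j)}$. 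Writing $i=\pi(j)$ and substituting into $W_i = U_i^B - p_i t_i - C_i^x$ yields the working identity
\[
W_i \;=\; U_i^B + (\hat p_i - p_i)\,t_i \;-\; \alpha\,E_j.
\]

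The technical heart of the argument is then the envelope-type identity $\alpha\,\partial E_j/\partial \hat p_i = t_i$, valid in the interior of any region on which the mechanism's optimal ordering $\pi$ is constant. I would derive it from the first-order characterisation already used in the proof of Theorem~3: writing $P_k := \sum_{r \geq k} \hat p_{\pi(r)}$, one has $\ell_k = w_{\pi(k)}\bigl(P_k/(\alpha-1)\bigr)^{-1/\alpha}$ and $s_k^\alpha \ell_k = w_{\pi(k)}\bigl(P_k/(\alpha-1)\bigr)^{(\alpha-1)/\alpha}$, so that
\[
\frac{\partial E_j}{\partial \hat p_i} \;=\; \sum_{k=1}^{j} w_{\pi(k)}\,\frac{1}{\alpha}\bigl(P_k/(\alpha-1)\bigr)^{-1/\alpha} \;=\; \frac{1}{\alpha}\sum_{k=1}^{j}\ell_k \;=\; \frac{t_i}{\alpha},
\]
using that $\partial P_k/\partial \hat p_i$ equals $1$ for $k\leq j$ and $0$ for $k>j$.

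Substituting this identity into the derivative of $W_i$ cancels the two $t_i$ contributions and leaves
\[
\frac{\partial W_i}{\partial \hat p_i} \;=\; (\hat p_i - p_i)\,\frac{\partial t_i}{\partial \hat p_i}.
\]
A direct differentiation of $t_i = \sum_{k=1}^{j} w_{\pi(k)}\bigl(P_k/(\alpha-1)\bigr)^{-1/\alpha}$ yields $\partial t_i/\partial \hat p_i < 0$, reflecting the intuition that a larger reported penalty makes the job look more urgent and shortens its completion time. Consequently $W_i$ is strictly increasing in $\hat p_i$ for $\hat p_i < p_i$ and strictly decreasing for $\hat p_i > p_i$, so $\hat p_i = p_i$ is the unique maximiser within each region on which $\pi$ is fixed.

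The remaining step, which I expect to be the main obstacle, is to lift these per-region statements to a global maximisation when the mechanism's chosen ordering $\pi$ itself switches as $\hat p_i$ varies. My plan is to exploit continuity of the mechanism's value function and of the induced $(t_i,E_j)$ at the tie-breaking values of $\hat p_i$, so that $W_i(\hat p_i)$ is continuous on the whole real line; the per-region monotonicity then glues to global quasi-concavity of $W_i(\hat p_i)$ with a unique maximum at $\hat p_i = p_i$. Since the argument does not rely on $\hat p_{-i}$, truth-telling is in fact a dominant strategy for each player, hence the truthful profile is a Nash equilibrium and mechanism~$X$ is strategy-proof.
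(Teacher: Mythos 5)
Your core computation is the same as the paper's: both proofs differentiate $W_i$ with respect to $\hat p_i$ inside a region where the ordering $\pi$ is fixed and show that everything cancels except a term of the form $(\hat p_i - p_i)\,\partial t_i/\partial \hat p_i$. The paper does this cancellation term by term on the raw expression for $C^x_i$; your packaging via $C^x_{\pi(j)} = \alpha E_j - \hat p_{\pi(j)} t_{\pi(j)}$ together with the identity $\alpha\,\partial E_j/\partial \hat p_i = t_i$ is a cleaner way to see the same cancellation, and both identities check out against the first-order conditions from Theorem~3. You also go strictly further than the paper in two respects: (i) the paper stops at the first-order condition and concludes ``$\hat p_i$ must be $p_i$'' without verifying that this critical point is a maximum, whereas your observation that $\partial t_i/\partial \hat p_i<0$ upgrades the critical-point statement to strict monotonicity on either side of $p_i$; (ii) the paper silently treats $\pi$ and $\sigma(i)$ as constants, whereas you explicitly isolate the region-switching issue.

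That said, the one step you flag as the main obstacle is indeed where your plan is not yet a proof, and the specific continuity claim you propose to use is doubtful: at a tie-breaking value of $\hat p_i$ the \emph{regulator's} optimal objective value is continuous (it is a minimum of finitely many continuous functions), but the quantities $t_i$ and $E_{\sigma(i)}$ entering $W_i$ generally jump when the optimal ordering switches, so $W_i(\hat p_i)$ need not be continuous and the per-region monotonicity does not automatically glue into global quasi-concavity. Closing this would require a direct comparison of player $i$'s welfare across the two tied schedules (or an argument that a misreport never profitably changes $i$'s position). Since the paper's own proof ignores this issue entirely, your attempt is at least as complete as the original; just be aware that the gluing step still needs a genuine argument rather than an appeal to continuity.
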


\begin{proof}
Consider the definition of $C_i^x(s)$. We assume that each player $\pi(j)$ want to maximize his welfare $W_{\pi(j)}(s)$. We show that PNE is obtained as an local minimum of $W_{\pi(j)}(s)$. By definition, he have that the player $i$ has a welfare is
\[U_i^B-p_i*t_i-(\alpha-1)^{-1} \sum_{k=1}^{\sigma(i)} (\widehat{p_i}+\alpha \sum_{r=k,r\neq \sigma(i)}^{n} \widehat{p_r})\ell_k\]

We now analyze when a changing unilateral of strategy $\hat{p_i}$ does't improve the welfare. We consider the first derive of welfare function.
\[\frac{\partial W_i}{\partial \widehat{p_i}}=0\]
Thus, we have:
\begin{eqnarray*}
0&=&p_i\sum_{k=1}^{\sigma(i)} \frac{\delta \ell_k}{\delta \widehat{p_i}}+ \sum_{k=1}^{\sigma(i)} (\alpha\frac{\delta s_k^{\alpha}}{\delta \widehat{p_i}}-1)\ell_k + (\alpha s_k^{\alpha}-\widehat{p_i})\frac{\delta \ell_k}{\delta \widehat{p_i}}\\
&&\sum_{k=1}^{\sigma(i)} (p_i-\widehat{p_i}) \frac{\delta \ell_k}{\delta \widehat{p_i}}+ \sum_{k=1}^{\sigma(i)} \alpha\frac{\delta s_k^{\alpha}}{\delta \widehat{p_i}}\ell_k -\ell_k + \alpha s_k^{\alpha} \frac{\ell_k}{(1-\alpha)\alpha s_k^{\alpha}}\\
&&\sum_{k=1}^{\sigma(i)} (p_i-\widehat{p_i}) \frac{\delta \ell_k}{\delta \widehat{p_i}}+ \sum_{k=1}^{\sigma(i)} \alpha\frac{\delta s_k^{\alpha}}{\delta \widehat{p_i}}\ell_k -\ell_k +\frac{\alpha \ell_k}{(1-\alpha)\alpha}\\
&&\sum_{k=1}^{\sigma(i)} (p_i-\widehat{p_i}) \frac{\delta \ell_k}{\delta \widehat{p_i}}+ \sum_{k=1}^{\sigma(i)} \frac{\alpha}{\alpha-1}\ell_k -\ell_k -\frac{\ell_k}{\alpha-1}\\
&&\sum_{k=1}^{\sigma(i)} (p_i-\widehat{p_i}) \frac{\delta \ell_k}{\delta \widehat{p_i}}+ \sum_{k=1}^{\sigma(i)} \ell_k \frac{\alpha-\alpha+1-1}{\alpha-1}\\
&&\sum_{k=1}^{\sigma(i)} (p_i-\widehat{p_i}) \frac{\delta \ell_k}{\delta \widehat{p_i}}
\end{eqnarray*}

Given that $\frac{\delta \ell_k}{\delta \widehat{p_i}}\neq 0$, then $\widehat{p_i}$
must be $p_i$.
\end{proof}

\bibliographystyle{plain}
\bibliography{Mechanism}

\begin{thebibliography}{1}

\bibitem{HohnJacobs:12:On-the-performance-of-Smiths}
Wiebke H\"ohn and Tobias Jacobs.
\newblock On the performance of {S}mith's rule in single-machine scheduling
  with nonlinear cost.
\newblock In {\em Proceedings of the 10th Latin American Theoretical
  Informatics Symposium (LATIN '12)}, LNCS, 2012.
\newblock To appear.

\bibitem{Min-energytree06}
Minming Li, Becky~Jie Liu, and Frances~F. Yao.
\newblock Min-energy voltage allocation for tree-structured tasks.
\newblock {\em J. Comb. Optim.}, 11(3):305--319, 2006.

\bibitem{LiOn3200discrete}
Minming Li, Andrew~C. Yao, and Frances~F. Yao.
\newblock Discrete and continuous min-energy schedules for variable voltage
  processor.
\newblock {\em Proceedings of the National Academy of Sciences of the United
  States of America}, 103(11):3983--3987, 2006.

\bibitem{megow.verschae:dualSchedules}
Nicole Megow and José Verschae.
\newblock Scheduling on a machine with varying speed: Minimizing cost and
  energy via dual schedules.
\newblock Technical report, arxiv.org, 2012.

\bibitem{Smith:56:Various-optimizers}
W.E. Smith.
\newblock Various optimizers for single-stage production.
\newblock {\em Naval Research Logistics Quarterly}, 3(1-2):59--66, 1956.

\bibitem{StillerWiese:10:Increasing-speed}
S.~Stiller and A.~Wiese.
\newblock Increasing speed scheduling and flow scheduling.
\newblock {\em Algorithms and Computation}, pages 279--290, 2010.

\bibitem{YaoDemmersShenker1995}
F.~Yao, A.~Demers, and S.~Shenker.
\newblock A scheduling model for reduced cpu energy.
\newblock In {\em Proceedings of the 36th Annual Symposium on Foundations of
  Computer Science}, FOCS '95, pages 374--382, Washington, DC, USA, 1995. IEEE
  Computer Society.

\end{thebibliography}
\end{document}